\documentclass[12pt]{article}

\usepackage[margin=1in]{geometry}
\usepackage[T1]{fontenc}
\usepackage[utf8]{inputenc}
\usepackage{mathptmx}
\usepackage{microtype}
\usepackage{amsmath,amssymb,amsthm,bm}
\usepackage{booktabs,tabularx,array}
\usepackage{float}
\usepackage[dvipsnames]{xcolor}
\usepackage{enumitem}
\usepackage{natbib}
\usepackage{csquotes}
\usepackage{setspace}
\usepackage{fancyhdr}
\usepackage{titlesec}
\usepackage{caption}
\usepackage[colorlinks=true,linkcolor=MidnightBlue,citecolor=MidnightBlue,urlcolor=MidnightBlue]{hyperref}

\definecolor{statementblue}{HTML}{1F4E79}
\definecolor{lightblue}{HTML}{EAF1F7}

\newtheorem{definition}{Definition}
\newtheorem{proposition}{Proposition}
\newtheorem{principle}{Principle}

\newcommand{\W}{\textsf{W}}
\newcommand{\E}{\textsf{E}}
\newcommand{\Prb}{\textsf{Pr}}
\newcommand{\Risk}{\textsf{R}}
\newcommand{\statementheading}[2]{%
  \subsection{#1}%
  \noindent{\textbf{Statement.}}\;
  \textit{#2}\par\medskip
}

\titleformat{\section}{\large\bfseries}{\thesection.}{0.55em}{}
\titleformat{\subsection}{\normalsize\bfseries}{\thesubsection.}{0.55em}{}
\titleformat{\subsubsection}{\normalsize\itshape}{\thesubsubsection.}{0.55em}{}

\setlist{itemsep=0.2em,topsep=0.35em}
\title{\Huge \textbf{Statistics in the Age of AI}}

\author{
{\textbf{Juan Sosa}}\\
 Departamento de Estadística, Universidad Nacional de Colombia\\
 Bogotá, Colombia
\and
{\textbf{Brenda Betancourt}}\\
 Department of Statistics, George Mason University\\
Fairfax, Virginia, USA
}

\date{}

\begin{document}

\maketitle

\begin{abstract}
Artificial intelligence (AI) can automate programming, model fitting, visualization, simulation, literature synthesis, and increasingly sophisticated methodological tasks, but it cannot remove the logical conditions under which data support scientific claims or consequential decisions. We formalize these conditions through \emph{statistical warrant}, which connects data to a claim through the target, observation regime, assumptions, procedure, uncertainty assessment, validation criterion, loss structure, governance and accountability. No algorithm can consistently recover a target that is not identified by the observation regime without additional information or assumptions. From this principle, we organize the argument around five statements. Questions and targets are integral to statistical methods. Data acquire evidential meaning only through design, provenance, and assumptions. Description, prediction, causal inference, and decision are mathematically distinct tasks. Analytical abundance requires accounting for how analyses are selected, uncertainty across the analytical system, and deployment validation. The statistician's fundamental role is therefore to construct, criticize, and safeguard statistical warrant, including by developing new methodology when existing theory is inadequate. This role requires statistical reasoning and attributable human and institutional responsibility.
\end{abstract}

\noindent\textbf{Keywords:} artificial intelligence; identification; statistical decision theory; statistical practice; statistical warrant; uncertainty; validation.

\newpage

\section{Introduction}

Artificial intelligence is transforming how data analysis is produced. Natural language systems can generate code, fit large collections of models, propose diagnostics, create visualizations, synthesize research, and draft interpretations. They can also support mathematical derivations, simulation studies, and methodological development. Therefore, any credible assessment of the statistical profession must recognize that many tasks that once required substantial human effort are becoming faster, less costly, or fully automated. The conventional debate distinguishes tasks that AI can perform from those beyond its current capabilities. This framing is inherently unstable because any defense of statistics based on machines being unable to formulate questions, reason causally, develop theory, or exercise judgment may be overtaken by technical progress. More fundamentally, it reduces statistics to a collection of human activities instead of a discipline that establishes the conditions for reliable learning from incomplete information. The relevant question is what conditions must hold for an analytic output to support a scientifically warranted conclusion.

This distinction extends a longstanding statistical tradition. Experimental design connects data to controlled comparisons \citep{fisher1935,cox1958}, while sampling theory connects samples to target populations through an explicit observation scheme \citep{neyman1934}. Decision theory relates information to action through a loss function \citep{wald1950}, and modern causal inference separates observed distributions from targets under specified interventions \citep{rubin1974,dawid2000,pearl2009,hernanrobins2020}. Model criticism and statistical pragmatism further emphasize that inferential conclusions depend on assumptions that require scrutiny \citep{box1976,kass2011,gelmanshalizi2013}. Despite their philosophical differences, none of these traditions regards a numerical output as self-validating.

Recent work on the trustworthiness of statistical inference argues that confidence in a conclusion depends on the trustworthiness of the data, methods, and practices that produce it \citep{hand2022}. Studies of AI-assisted data analysis have examined data scientists' expectations about automation, reviewed the capabilities of LLM-based data agents, and evaluated whether LLMs can select appropriate statistical methods \citep{wang2019,sun2025agents,zhu2024}. This literature shows that AI can participate in multiple stages of analysis, while also documenting limitations in methodological choice, consistency across interactions, and the incorporation of domain knowledge. Related discussions examine the broader relationship among statistics, data science, and AI \citep{friedrich2022,asa2023role,lin2025,donoho2026rebuilding}. A recent essay by \citet{tao2026} applies a similar premise to mathematics. Assuming that AI will become capable of performing a substantial range of research level mathematical tasks, Tao asks which goals and values should guide mathematical research. He distinguishes proof generation from verification, exposition, community evaluation, and incorporation into established theory. A correct proof, on this account, is only one stage in the production of mathematical knowledge.

This article focuses on a different question. In statistical inference, an output may be correctly produced under a given model or procedure without supporting the scientific conclusion for which it is used. Computation alone cannot establish that connection. It depends on a defined target, an observation regime, explicit assumptions, an inferential or decision procedure, an account of uncertainty, and validation under intended use. Following the model of the ASA statement on $p$-values, we present the
argument through a small set of statements supported by explanation and examples \citep{wasserstein2016,wasserstein2019}. 

Drawing on the use of \emph{warrant} in argumentation theory for the reasoning that connects evidence to a claim \citep{toulmin1958,toulmin1984}, we call the structure supporting this connection \emph{statistical warrant}. The concept distinguishes three objects often conflated in analytical practice. An \emph{output} is a number, model, classification, visualization, or narrative. A \emph{procedure} is the rule that maps available information into that output. A \emph{warrant} comprises the reasons and conditions under which the output supports a specified claim or action. Our central thesis is that AI can automate outputs and procedures and assist in constructing statistical warrants, but it cannot eliminate the conditions required for warranted conclusions. As analytical systems generate plausible results at unprecedented speed and scale, statistical work will place greater emphasis on analytical design, methodological innovation, critical evaluation, and responsible oversight.

The remainder of the article is organized as follows. Section 2 defines statistical warrant and uses a nonidentification result to show why computation cannot replace identifying information or assumptions. Section 3 presents five statements concerning targets, the evidential role of design and provenance, distinctions among statistical tasks, analytical abundance, and the role of statistical expertise. Section 4 examines the implications for statistical research, education, institutions, and journals. The final section summarizes the argument and its implications for statistical practice.

\section{Statistical warrant}

This section formalizes \emph{statistical warrant} as the structure connecting observations to a scientific claim or decision. It specifies the target, observation regime, assumptions, procedure, uncertainty, validation, loss, and responsibility required to justify that connection. It then establishes why identification precedes computation and why computational power alone cannot recover a target that is not determined by the observable distribution.

\subsection{A formal representation}

Let $s\in\mathcal{S}$ denote a scientifically admissible state of the system under study. A state specifies the underlying features required to define the scientific target $\tau(s)$ and, under a design or observation regime $d\in\mathcal{D}$, the distribution of the observable data. The regime $d$ specifies the mechanism through which data are collected, measured, and recorded. In a controlled design, this mechanism may involve probability sampling, randomized treatment assignment, or a prespecified measurement schedule. In an observational regime, it may instead reflect convenience sampling, clinical recording practices, nonresponse, censoring, or missingness. For example, in a disease prevalence study using an imperfect diagnostic test, a state may specify the true prevalence, the population composition, and the diagnostic performance. The distribution of the recorded results depends on these features and on the sampling and testing regime.

A state is \textit{scientifically admissible} when it is compatible with substantive knowledge and the assumptions maintained in the analysis, even if it does not describe the true data-generating mechanism. In the diagnostic example, admissibility requires coherent probabilities, defensible measurement assumptions, and externally supported restrictions on diagnostic performance. States violating these conditions are excluded from $\mathcal{S}$, but distinct admissible states may remain compatible with the evidence and even generate the same observable distribution. More generally, $\mathcal{S}$ may include alternative population distributions, causal structures, measurement mechanisms, and deployment conditions. These alternatives may imply different targets, distributions of the observable data, or the relationship between them.

Under a design or observation regime $d\in\mathcal{D}$, the observable random element $O_n$ has \textit{probability law} $\textsf{P}^{d}_{s,n}$. The index $n$ usually denotes the sample size, although it may more generally index the amount or structure of the available information. A probability law specifies the complete distribution of $O_n$. In $\textsf{P}^{d}_{s,n}$, $s$ indexes the underlying scientific state, $d$ the mechanism through which that state is observed, and $n$ the information level. A given state may induce different observable distributions under different regimes, and distinct states may induce the same distribution under a fixed regime.

A substantive question is formalized through a target functional $\tau:\mathcal{S}\rightarrow\Theta$, with $\theta=\tau(s)$, which selects the feature of the scientific state that the analysis seeks to learn, such as a prevalence, causal effect, predictive risk, or optimal decision. Specifying the target determines what constitutes a correct answer and which uncertainty assessment is appropriate. This distinction is essential because the observable distribution depends on both the state and the observation regime, while the target depends on the state. Distinct states may therefore generate the same observable distribution but imply different target values, making the target unidentified without additional assumptions, design information, or external evidence.

A \textit{statistical procedure} is a measurable rule $\delta$ that maps the observed data $O_n$ and any prespecified auxiliary information into an action $a\in\mathcal{A}$, such as an estimate, prediction, interval, test, or decision. Dependence on $n$ is left implicit. Measurability means that $\delta$ is compatible with the probability structures on the data and action spaces, ensuring that $\delta(O_n)$ is a well defined random action and that its probabilities and expected losses exist. 

The adequacy of $\delta$ must be evaluated relative to a criterion that represents the consequences of error. Given a loss function $L$, its risk under state $s$ and regime $d$ is
\[
\Risk_s(\delta)
=
\E^{d}_{s}\big[
L\big(\delta(O_n),\tau(s)\big)
\big],
\]
where $\E^{d}_{s}$ denotes expectation with respect to the probability law $\textsf{P}^{d}_{s,n}$. Thus, the risk is the expected loss over hypothetical realizations of the data generated under state $s$ and regime $d$ \citep{wald1950,blackwellgirshick1954,berger1985}.

A frequentist guarantee may require uniform control of risk over a scientifically relevant subset $\mathcal{S}_0\subseteq\mathcal{S}$ through
\[
\sup_{s\in\mathcal{S}_0}\Risk_s(\delta)
\leq \varepsilon_n,
\]
where $\varepsilon_n$ is an acceptable risk bound that may depend on the information size. This condition ensures that the expected loss does not exceed $\varepsilon_n$ for any state in $\mathcal{S}_0$. The guarantee remains conditional on the specification of $\mathcal{S}_0$, the observation regime, the target, and the loss function, each of which requires scientific justification \citep{wald1950,lecam1986,lehmanncasella1998}.

Under a Bayesian formulation, a prior distribution $\pi(ds)$ on $\mathcal{S}$ induces the posterior distribution $\pi(ds\mid O_n)$. The posterior expected loss of an action $a$ and the corresponding Bayes rule are
\[
\rho_{\pi}(a\mid O_n)
=
\int_{\mathcal{S}}
L\big(a,\tau(s)\big)\,
\pi(ds\mid O_n),
\qquad
\delta_{\pi}(O_n)
\in
\arg\min_{a\in\mathcal{A}}
\rho_{\pi}(a\mid O_n).
\]
Thus, the Bayes rule selects an action by averaging its consequences over posterior uncertainty about the scientific state \citep{savage1954,degroot1970,berger1985}.

The Bayes rule minimizes the posterior expected loss $\rho_{\pi}(a\mid O_n)$ whereas the preceding frequentist criterion controls repeated sampling risk uniformly over $\mathcal{S}_0$. Despite this distinction, both approaches presuppose a scientifically meaningful question, observation regime, target, and loss function. Neither formalism can compensate for an inappropriate specification of these elements. An analysis may therefore be mathematically correct within a model while still lacking statistical warrant.

\begin{definition}[Statistical warrant]
A \emph{statistical warrant} is the structured specification
\[
\W=(Q,\mathcal{S},d,\tau,A,\delta,U,V,L,G).
\]
Here, $Q$ denotes the substantive question, $\mathcal{S}$ the admissible state space, $d$ the design, observation regime, and data provenance, $\tau$ the target, $A$ the identifying and modeling assumptions, and $\delta$ the statistical procedure. The remaining components specify the uncertainty assessment $U$, validation regime $V$, loss function or criterion of scientific adequacy $L$, and governance and accountability structure $G$.
\end{definition}

The components of Definition 1 may be uncertain or contested, but they must be sufficiently explicit to establish what is claimed, why the data are informative about that claim, which assumptions support the connection, how the conclusion may fail, and who is accountable for its use. The procedure $\delta$ may comprise a sequence of analytical operations rather than a single model fit. Records may be linked across sources, missing values imputed, variables extracted from text or images, and labels assigned through automated classification before the final analysis is conducted. These operations depend on assumptions represented in $A$ and may introduce uncertainty that must be addressed in $U$ and $V$. Their relationship to the original records is also part of the data provenance represented by $d$.

This definition is deliberately compatible with likelihood, frequentist, Bayesian, partial identification, and decision theoretic perspectives, which provide different accounts of evidence and uncertainty but all require a defensible connection between the model and the data-generating or observation process \citep{hacking1965,mayocox2006,kass2011,gelmanshalizi2013}. Statistical pluralism does not imply an absence of standards. It requires explicitly justified standards appropriate to different inferential purposes. 

The framework also applies when $\delta$ includes a black-box machine learning method. Interpretability is not a universal condition for supporting a predictive conclusion. Its importance depends on the substantive question, the intended use of the output, and the consequences of error. \citet{lipton2018} distinguishes several purposes of interpretability, including scientific explanation, assessment of individual predictions, detection of undesirable behavior, and justification of decisions. The components of statistical warrant make it possible to state which purpose is relevant and what evidence it requires. \citet{rudin2019} argues that interpretability may be essential when black-box methods are used in high-stakes settings and users must examine the basis of a decision, incorporate substantive information not represented in the data, identify errors, permit affected individuals to challenge a result, or determine when intervention is required. In such settings, predictive performance alone is insufficient. Interpretability may instead form part of the validation criterion $V$, the consequences represented by $L$, or the accountability structure $G$.

\subsection{Identification precedes computation}

The distinction between the observable distribution $\textsf{P}^{d}_{s,n}$ and the scientific target $\tau(s)$ provides the clearest mathematical explanation for why computation alone is insufficient. Define the observation map by $r_d(s)=\textsf{P}^{d}_{s,n}$. The target $\tau$ is identified under $(\mathcal{S},d)$ if
\[
r_d(s_1)=r_d(s_2)
\quad\Longrightarrow\quad
\tau(s_1)=\tau(s_2)
\]
for every $s_1,s_2\in\mathcal{S}$. Equivalently, $\tau$ must be constant across all states that induce the same observable distribution. If this condition fails, observationally indistinguishable states imply different target values, and data generated under regime $d$ cannot determine the target without additional assumptions or information.

\begin{proposition}[Computational power cannot repair nonidentification]
Suppose there exist states $s_1,s_2\in\mathcal{S}$ such that
\[
\textsf{P}^{d}_{s_1,n}
=
\textsf{P}^{d}_{s_2,n}
\,\text{ for every } n,
\qquad
\tau(s_1)\neq\tau(s_2).
\]
Then no sequence of estimators $T_n(O_n)$ can be consistent for $\tau(s)$ at both $s_1$ and $s_2$.
\end{proposition}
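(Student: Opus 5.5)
The plan is a proof by contradiction that uses only the equality of laws. Suppose some sequence $T_n(O_n)$ is consistent at both states. That is, for every $\eta>0$,
\[
\textsf{P}^{d}_{s_1,n}\big(\|T_n(O_n)-\tau(s_1)\|>\eta\big)\to 0
\quad\text{and}\quad
\textsf{P}^{d}_{s_2,n}\big(\|T_n(O_n)-\tau(s_2)\|>\eta\big)\to 0 ,
\]
where $\|\cdot\|$ is a metric on $\Theta$. The plan uses only consistency in probability, since it is the weakest standard notion; almost sure consistency implies it, so that case is covered too.

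The key step is the following observation. Because $T_n$ is a fixed measurable function of $O_n$, the law of $T_n(O_n)$ is the pushforward of the law of $O_n$. Since $\textsf{P}^{d}_{s_1,n}=\textsf{P}^{d}_{s_2,n}$ for every $n$, the distribution of $T_n(O_n)$ is the same under $s_1$ and $s_2$. Consequently, for any measurable set $B\subseteq\Theta$, $\textsf{P}^{d}_{s_1,n}(T_n\in B)=\textsf{P}^{d}_{s_2,n}(T_n\in B)$.

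Now set $\eta=\|\tau(s_1)-\tau(s_2)\|/3>0$, which is positive because $\tau(s_1)\neq\tau(s_2)$. Let $B_j$ be the open ball of radius $\eta$ around $\tau(s_j)$. These balls are disjoint by the triangle inequality. Consistency at $s_1$ gives $\textsf{P}^{d}_{s_1,n}(T_n\in B_1)\to1$. Consistency at $s_2$, combined with the equality of laws, gives $\textsf{P}^{d}_{s_1,n}(T_n\in B_2)=\textsf{P}^{d}_{s_2,n}(T_n\in B_2)\to1$. Since $B_1\cap B_2=\emptyset$, the two probabilities sum to at most $1$, yet both tend to $1$. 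This is a contradiction for large $n$.

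The main obstacle is conceptual bookkeeping rather than calculation. The point to make precise is that "estimator" means a map depending on the data $O_n$ alone, together with prespecified auxiliary information that does not depend on $s$. This is exactly the sense in which no amount of computation helps: any algorithm, however expensive, is still a measurable function of $O_n$, and the argument never uses anything about $T_n$ beyond that. If randomized algorithms are allowed, I would include the auxiliary randomization $U$ in the observation, with a law independent of $s$. The joint law of $(O_n,U)$ then coincides under $s_1$ and $s_2$, and the same argument applies.

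I would also note that $\Theta$ needs a metric, or more generally a Hausdorff topology with separating neighborhoods, for the notion of consistency to make sense. With that in place, the disjoint-neighborhood step goes through unchanged.
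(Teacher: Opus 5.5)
Your proof is correct and is essentially the paper's argument: the paper also takes disjoint $\epsilon$-balls around $\tau(s_1)$ and $\tau(s_2)$ (with $\epsilon<\Delta/2$ rather than your $\Delta/3$). It then observes that the common law $\textsf{P}_n$ would have to give probability tending to one to both disjoint events $T_n^{-1}(\mathcal{B}(\theta_k,\epsilon))$, which yields $2\leq 1$. Your pushforward phrasing and your remarks on randomized estimators and Hausdorff parameter spaces are harmless extensions of the paper's argument.
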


\begin{proof}
Let $\theta_k=\tau(s_k)$ for $k=1,2$, and let $\textsf{d}$ be a metric on $\Theta$. Since $\theta_1\neq\theta_2$, it follows that $\Delta=\textsf{d}(\theta_1,\theta_2)>0$. Let $\mathcal{O}_n$ denote the observation space and let $T_n:\mathcal{O}_n\rightarrow\Theta$ be a measurable estimator. Choose $0<\epsilon<\Delta/2$. For each $k=1,2$, define the open $\epsilon$-ball centered at $\theta_k$ by $\mathcal{B}(\theta_k,\epsilon)=\left\{t\in\Theta:\textsf{d}(t,\theta_k)<\epsilon\right\}$, and let
\[
B_{k,n}
=
T_n^{-1}\bigl(
\mathcal{B}(\theta_k,\epsilon)
\bigr)
=
\left\{
o\in\mathcal{O}_n:
\textsf{d}\bigl(T_n(o),\theta_k\bigr)<\epsilon
\right\}.
\]
Thus, $B_{k,n}$ is the event that the estimator computed from the observed data lies within distance $\epsilon$ of $\theta_k$.

The events $B_{1,n}$ and $B_{2,n}$ are disjoint. Otherwise, there would exist some $o\in B_{1,n}\cap B_{2,n}$, and the triangle inequality would imply
\[
\Delta
\leq
\textsf{d}\bigl(\theta_1,T_n(o)\bigr)
+
\textsf{d}\bigl(T_n(o),\theta_2\bigr)
<
2\epsilon
<
\Delta,
\]
which is impossible. For every $n$, the equality $\textsf{P}^{d}_{s_1,n}=\textsf{P}^{d}_{s_2,n}$ allows us to denote the common observation law by $\textsf{P}_n$. If the sequence $\{T_n\}$ were consistent at both states, then $\textsf{P}_n(B_{1,n})\rightarrow 1$ and $\textsf{P}_n(B_{2,n})\rightarrow 1$. However, because $B_{1,n}$ and $B_{2,n}$ are disjoint, $\textsf{P}_n(B_{1,n})+\textsf{P}_n(B_{2,n})=\textsf{P}_n(B_{1,n}\cup B_{2,n})\leq 1$ for every $n$. Taking limits would imply $2\leq 1$, which is impossible. Therefore, no sequence of estimators based on $O_n$ can be consistent for $\tau(s)$ at both $s_1$ and $s_2$.
\end{proof}

Although elementary, the proposition has a fundamental implication. No amount of additional data from the same observation regime, model complexity, computational search, or AI capability can identify a target that the observable law does not distinguish. Resolving nonidentification requires new design information, stronger assumptions, external evidence, a reformulated target, or reporting an identified set \citep{manski2003,greenland1999,pearl2009}. AI can help formulate and evaluate these alternatives, but it cannot choose among them using the observable distribution alone.


\begin{principle}[Epistemic non-substitution]
Greater computational power can reduce numerical, optimization, and search errors within a well-specified statistical problem. By itself, however, it cannot define the scientific target, validate the observation regime, justify untestable assumptions, determine which consequences matter, or assign responsibility for a decision.
\end{principle}

This principle does not posit an intrinsic boundary between human and machine intelligence. Future AI systems may contribute to every component of $\W$, but those components remain necessary. The principle also distinguishes epistemic competence from institutional responsibility. Even a highly capable system cannot by itself confer authority to define a loss function, accept residual risk, or assume accountability for harm.

\section{Five statements}

The framework of statistical warrant leads to five statements on the role of statistics in AI-mediated analysis. They concern target specification, the evidential basis of data, the distinction among inferential tasks, the consequences of analytical abundance, and the responsibilities of the statistician. These statements apply to any analytical system whose outputs are intended to support scientifically warranted conclusions, regardless of current AI capabilities.

\statementheading{The question and target are part of the method}{A statistical analysis is not defined by an algorithm and a dataset. It is defined first by the question, target population, target functional, and inferential or decision objective.}

An algorithm may be technically valid yet scientifically irrelevant. The instruction ``predict $Y$ from $X$'' leaves unspecified the target population, prediction horizon, intervention regime, loss function, and intended use. Likewise, ``estimate the treatment effect'' does not define the treatment versions, comparison, population, time point, or causal contrast. These specifications determine the mathematical target and the criterion by which the analysis must be evaluated \citep{shmueli2010,hernan2019,efron2020,vanderlaanrose2011}.

Formally, neither the data nor the algorithm determines the target. The target is defined by translating a substantive question $Q$ into a functional $\tau:\mathcal{S}\rightarrow\Theta$ on the admissible state space $\mathcal{S}$. Different questions may therefore define different estimands from the same observed records. An average treatment effect, an effect among the treated, and a policy value are distinct functionals, just as a predictive mean, a conditional quantile, and a classification rule optimized under asymmetric costs are distinct targets. Target specification is therefore an integral part of the statistical method.

AI heightens this requirement because it can produce polished answers before the underlying question is precisely formulated. Such fluency may conceal \emph{target substitution}, whereby a difficult scientific question is replaced by one that is easier to compute but substantively different. For example, when asked whether a treatment reduces mortality, an AI system may fit a predictive model and report a treatment coefficient or variable importance instead of a causal effect. A primary role of the statistician is to prevent this substitution by translating the substantive problem into a scientifically meaningful and mathematically well defined target.

\statementheading{Data acquire evidential meaning through design, provenance, and assumptions}{The size or complexity of a dataset does not determine its evidential value. Evidence depends on how observations, variables, labels, and missing values were generated and on the assumptions connecting that process to the target.}

The same numerical values may support different conclusions depending on whether they arise from probability sampling, sampling conditional on the outcome, convenience sampling, a controlled experiment, or an observational study \citep{cochran1977,prenticepyke1979,sarndal1992}. The design determines the repetitions, comparisons, or counterfactual regimes that give uncertainty its meaning \citep{fisher1935,neyman1934,cox1958}. Data provenance identifies the population and measurement process represented by the records, while assumptions determine which conclusions may extend beyond a description of the observed data. 

Missing data provide a simple illustration. Let $Y$ be a binary outcome and let $R=1$ indicate that $Y$ is observed. By the law of total probability, we have that
\[
\Prb(Y=1)
=
\Prb(Y=1\mid R=1)\Prb(R=1)
+
\Prb(Y=1\mid R=0)\Prb(R=0).
\]
The observed data identify $\Prb(Y=1\mid R=1)$ and $\Prb(R=1)$, but not the outcome distribution among nonrespondents, $\Prb(Y=1\mid R=0)$. Hence, identification requires additional information such as a missing at random assumption conditional on observed covariates, auxiliary data, a sensitivity model, or partial identification bounds, each of which may yield different conclusions \citep{littlerubin2019,manski2003}. No imputation algorithm can make the unidentified component empirically observed.

Moreover, increasing sample size does not resolve these problems. Selection bias may dominate sampling variability, causing a large but nonrepresentative dataset to produce highly precise yet misleading conclusions \citep{meng2018}. Label construction, measurement error, undocumented exclusions, and preprocessing choices may likewise determine what a model learns. Documentation tools such as datasheets and model cards are valuable because they make data provenance and intended use explicit, although documentation alone cannot establish validity \citep{gebru2021,mitchell2019}.

AI can support data validation, anomaly detection, pipeline reconstruction, and sensitivity analysis. However, data provenance can be compromised when generated transformations or default settings are not documented. For example, an automated pipeline may discard records with missing covariates, silently changing the analyzed population and its relationship to the target population. The statistical obligation remains unchanged. Assumptions must be explicit, their empirical implications examined whenever possible, and conclusions restricted to what the design and data provenance support.

\statementheading{Inferential tasks are mathematically distinct}{Description, prediction, explanation, causal inference, and decision are not interchangeable labels. They have different targets, assumptions, losses, and validation criteria. Success at one task does not establish success at another.}

Consider prediction, causal inference, and decision. Let $\textsf{P}_{\star}$ denote the deployment distribution and $\E_{\star}$ expectation with respect to it. A predictive rule may target
\[
f^\star
\in
\arg\min_{f\in\mathcal{F}}
\E_{\star}
\left[
\ell\bigl(Y,f(X)\bigr)
\right],
\]
where $X$ is the vector of predictors, $Y$ is the outcome, $\mathcal{F}$ is the class of candidate prediction functions, and $f(X)$ is the prediction produced from $X$. The loss $\ell(Y,f(X))$ measures predictive error, while the expectation averages this loss over the joint deployment distribution of $(X,Y)$. A causal analysis may instead target the average treatment effect
\[
\tau_{\textsf{ATE}}
=
\E_{\mathrm{tar}}
\left[
Y(1)-Y(0)
\right],
\]
where $Y(1)$ and $Y(0)$ are the potential outcomes under treatment and control, respectively, and $\E_{\mathrm{tar}}$ denotes expectation over the target population. This quantity is not generally identified from the observational distribution without an appropriate design or additional identifying assumptions. A decision rule may target
\[
\delta^\star
\in
\arg\min_{\delta}
\E_{\star}
\left[
L\bigl(\delta(X),S\bigr)
\right],
\]
where $\delta(X)\in\mathcal{A}$ is the action selected after observing $X$, $S$ is the state that determines the consequences of that action, and $L(\delta(X),S)$ quantifies the associated loss when the state is $S$. In this case, $\E_{\star}$ averages loss over the joint deployment distribution of $(X,S)$. These objectives may involve the same data and algorithms, but they define mathematically distinct problems \citep{wald1950,shmueli2010,pearl2009,hernanrobins2020}.

High predictive accuracy does not identify a causal effect, and causal identification does not ensure useful individual prediction. Even calibrated probabilities do not prescribe an action until costs, constraints, and available interventions are specified. Proper scoring rules can align forecast evaluation with a declared predictive objective, but they cannot determine which objective is substantively appropriate \citep{gneiting2007}. Likewise, the incompatibility of common fairness criteria under general data conditions shows that predictive performance alone cannot determine a unique normative objective \citep{chouldechova2017,selbst2019}.

AI systems are particularly vulnerable to task confusion because the same interface can answer distinct formulations with similar confidence and style. For example, a system asked who should receive a treatment may recommend individuals with the highest predicted risk, although outcome risk and individual treatment benefit are different targets. The statistician must distinguish association from intervention, estimation from prediction, and prediction from action. Confusing these tasks changes the mathematical object under study.

\statementheading{Analytical abundance increases the need for statistical control}{AI lowers the cost of generating and revising analyses more rapidly than it lowers the cost of validating them. The resulting analytical abundance amplifies adaptive selection, hidden multiplicity, model uncertainty, and deployment risk.}

Selection is adaptive when decisions about which analyses to conduct, modify, or report depend on results already observed in the same data. Examples include adding variables after inspecting model diagnostics, redefining an outcome after examining preliminary results, and selecting a model after repeated evaluation on the same validation set. Multiplicity may remain hidden when the reported analysis does not disclose the full set or sequence of analyses considered.

Even when each candidate analysis is valid in isolation, selecting which result to report changes its statistical properties. Consider first a fixed collection of tests. Suppose $m$ true null hypotheses produce independent $p$-values uniformly distributed on $[0,1]$, and only the smallest is reported. Then the probability of observing at least one nominally significant result at level $\alpha$ is
\[
\Prb\left(
\min_{1\leq j\leq m}p_j\leq\alpha
\right)
=
1-(1-\alpha)^m.
\]
For $m=100$ and $\alpha=0.05$, this probability is approximately $0.994$. Although independence yields the simple expression above, the central issue is adaptive selection. AI can create a more complex problem through automated feature construction, prompt revision, model search, outcome redefinition, and repeated inspection of the same validation data.

More generally, let $Z$ denote the data, let $\widehat{\mu}_j(Z)$ be the statistic produced by candidate analysis $j$, and define $\mu_j=\E(\widehat{\mu}_j(Z))$. Let $J$ denote the data dependent index selected through an adaptive analytical transcript. If each centered statistic $\widehat{\mu}_j-\mu_j$ is $\sigma$-sub-Gaussian, then, under standard regularity conditions, the expected selection bias satisfies
\[
\left|
\E\bigl[
\widehat{\mu}_J-\mu_J
\bigr]
\right|
\leq
\sigma\sqrt{2\,I(J;Z)},
\]
where $I(J;Z)$ is the mutual information between the selected analysis and the data. This quantity measures how much information the selected analysis $J$ contains about the observed data $Z$ and quantifies the data dependence induced by the selection process \citep{russozou2020}. The bound shows that adaptive selection can create bias even when every candidate statistic is well behaved when considered in isolation. Selective inference, reusable holdouts, preregistration, nested validation, and independent replication address different aspects of this problem \citep{leebpotscher2005,dwork2015,taylortibshirani2015,gelmanloken2014}.

Uncertainty quantification must extend beyond the output conditional on a selected model. A posterior interval or standard error may represent within-model variation while omitting uncertainty from measurement error, missingness, model selection, model discrepancy, computational approximation, or distribution shift \citep{chatfield1995,draper1995,kennedyohagan2001,hullermeier2021}. For a deployed rule $\delta$, the relevant risk is
\[
\Risk_{\star}(\delta)
=
\E_{\star}
\left[
L\bigl(\delta(X),Y\bigr)
\right],
\]
where $\E_{\star}$ denotes expectation under the deployment distribution $\textsf{P}_{\star}$. Risk estimated under a validation distribution $\textsf{P}_{\mathrm{val}}$ is not generally informative about $\Risk_{\star}(\delta)$ without transport assumptions relating $\textsf{P}_{\mathrm{val}}$ to $\textsf{P}_{\star}$. Estimating $\Risk_{\star}(\delta)$ from validation data requires a
defensible relationship between the validation and deployment distributions. This may involve assuming that the two distributions are equal or that relevant conditional distributions remain stable, and that the deployment population is adequately
represented in the validation data \citep{quinonero2009}. Studies using newly collected test sets and controlled distribution shifts have documented declines in predictive accuracy and in the reliability of uncertainty estimates when these conditions fail \citep{recht2019,ovadia2019}.


These examples identify two ways in which AI-assisted analysis can weaken statistical warrant. Before a result is reported, the same data may be used repeatedly to generate, revise, and select analyses. After a model is deployed, the population or measurement process may differ from the conditions under which it was validated. For instance, consider a clinical triage model selected after repeated evaluation on the same validation data. Its reported performance may already reflect selection. If patient composition or recording practices later change, the original validation results may also cease to represent deployment performance. The model may then require independent evaluation, recalibration, or suspension. Audit and risk-management frameworks can document the selection process, validation conditions, monitoring criteria, and rules for intervention \citep{raji2020,nist2023,nist2024}. These procedures contribute to statistical warrant only when the target, assumptions, uncertainty assessment, and intended use have been appropriately specified.


\statementheading{The statistician constructs and safeguards statistical warrant}{The fundamental role of the statistician is not the mechanical application of a fixed catalogue of methods. It is to formulate, create, evaluate, and govern the inferential structures that make learning from data scientifically defensible.}

This statement captures both continuity and transformation. As routine execution becomes increasingly automated, the statistical function becomes more important because both analyses and persuasive but invalid conclusions can be generated at unprecedented speed and scale. Statistical expertise therefore shifts from operating procedures toward designing, evaluating, and criticizing the inferential systems that give those procedures scientific meaning. Table~\ref{tab:roles} organizes this work into four related functions: defining the analytical problem, developing appropriate methods, evaluating the resulting evidence, and overseeing its interpretation and use.

\begin{table}[H]
\centering
\footnotesize
\setlength{\tabcolsep}{4pt}
\renewcommand{\arraystretch}{1.15}

\begin{tabularx}{\textwidth}{
    >{\hsize=0.75\hsize\bfseries\raggedright\arraybackslash}X
    >{\hsize=1.00\hsize\raggedright\arraybackslash}X
    >{\hsize=1.25\hsize\raggedright\arraybackslash}X
}
\toprule
Function & Primary object & Central question \\
\midrule
Architect
& Question, target, design, and loss
& What must be learned, from whom, under which observation regime, and for what use? \\

Methodologist
& Model, assumptions, theory, and computation
& What is genuinely new about the problem, and what procedure has defensible properties? \\

Evaluator and auditor
& Uncertainty, validation, adaptivity, and shift
& Which guarantees hold, what has been selected, and where can the system fail? \\

Steward of evidence
& Interpretation, communication, governance, and record
& What may be concluded, with what limitations, and who is responsible for acting on it? \\
\bottomrule
\end{tabularx}

\caption{Core functions of the statistician in AI-mediated analysis.}
\label{tab:roles}
\end{table}

These functions overlap in practice and may be performed by different people at different stages of an analysis. In particular,  methodological development becomes necessary when existing procedures do not provide adequate warrant for a new target, observation regime, or form of data use. AI-assisted analysis creates statistical problems that are not covered by theory developed for a prespecified procedure applied once to a fixed dataset. When variables, models, outcomes, or prompts are revised in response to intermediate results, uncertainty conditional on the final analysis does not represent the full selection process. When a deployed system influences subsequent decisions or data collection, assumptions of independence or stability may also fail. New methods are needed to define targets under these conditions, determine when they are identified, and provide uncertainty assessments with stated operating properties. This view continues the broader tradition of treating statistical methodology as a response to changes in how data are generated and analyzed \citep{tukey1962,chambers1993,breiman2001,donoho2017}.

AI can assist with many parts of statistical work, including literature review, derivations, model formulation, programming, and sensitivity analysis. Technical correctness, however, does not establish that the resulting analysis is appropriate for the scientific question. For instance, consider a longitudinal analysis in which an AI system proposes and implements a model under the assumption that missingness due to dropout is ignorable. The implementation may be correct, but that does not establish that the assumption is appropriate. If dropout depends on unobserved outcomes after conditioning on the observed data, the full data distribution is not identified without additional assumptions. Constructing the statistical warrant requires stating this assumption, explaining its role in identification, and examining sensitivity to plausible departures from it. These judgments belong to $\W$. AI may support them, but they must remain open to criticism and embedded in accountable scientific practice.

Importantly, the statement concerns the functions described in Table~\ref{tab:roles} not professional designation. Individuals without the title \emph{statistician} may perform these functions, while statisticians may fail to do so. Future AI systems may also exhibit substantial statistical reasoning. What remains indispensable is the statistical function itself. In consequential settings, accountability cannot be delegated to a model but must remain attributable to people and institutions with the competence, authority, information, and resources required to intervene \citep{asa2022ethics,asa2024ethicalai,raji2020}.

\section{Consequences for statistical science and practice}

The preceding statements have direct implications for how statistical knowledge is developed, taught, evaluated, and governed. This section translates statistical warrant into priorities for research, education, and institutional practice in AI-mediated analysis.

\subsection{Research}

The central research challenge extends beyond constructing high-performing models to maintaining statistical warrant when AI contributes to problem formulation, data processing, model selection, interpretation, and decisions. Priority areas include inference after repeated data dependent exploration by analysts and AI systems, propagation of uncertainty through multistage analyses, evaluation when the population or measurement process changes between model development and use, causal inference when decisions based on a model affect subsequently observed data, assessment of the factual and statistical reliability of generated conclusions, and decision procedures that defer to human review when available information is insufficient. These problems require computational advances, but their central questions concern the target, the process that generated the available data, the assumptions connecting those data to the intended use, and the consequences of error \citep{yu2020,lin2025,donoho2026rebuilding}.

Methodological research should report more than asymptotic guarantees under an idealized model or performance on a familiar benchmark. It should identify the target and observation regime for which a method is intended, state the assumptions supporting its guarantees, describe the relevant sources of uncertainty, and assess sensitivity to plausible violations. It should also distinguish error arising from computational approximation from error due to measurement, sampling, model misspecification, or a mismatch between the stated target and the scientific question. Stability across defensible analytical choices can show that a conclusion is not driven by a particular specification, but stability alone cannot establish identification or scientific relevance \citep{yu2020}.

Human-AI workflows present an additional problem. Humans and AI systems may jointly formulate questions, construct analytical data through operations such as record linkage and variable extraction, select models, and interpret results. Selective inference and adaptive data analysis address particular forms of data dependent choice, but provide only a partial basis for inference when the question, target, analytical data, and procedure may all be revised during the interaction \citep{dwork2015,taylortibshirani2015}. The need for methods to evaluate and validate multistep human-AI collaboration is also reflected in current research priorities \citep{nsf2020humanai}.

\subsection{Education}

The value of AI assistance depends partly on the knowledge of the user. For experienced statisticians, AI may complement expertise developed through formal training and practice by assisting with computation, programming, and exploration. For students who have not yet developed that expertise, the same assistance may conceal errors or conceptual gaps that they are not prepared to recognize. Evidence that humans and LLMs make different types of errors when selecting statistical methods suggests possible complementarity, although effective collaboration cannot be assumed \citep{zhu2024}. Statistical training requires opportunities to reason and work without AI alongside supervised experience using it critically.

These considerations shift the emphasis of statistical education beyond routine computation without eliminating it. Students continue to require probability, inference, experimental and sampling design, causal reasoning, decision theory, optimization, and scientific computing. They must also learn to check generated code and derivations, detect target substitution, analyze adaptive workflows, evaluate uncertainty when populations or measurement processes change, and communicate limitations. Assessment can emphasize constructing and criticizing \W. Documentation is also part of this training. When AI contributes to data preparation, model selection, or interpretation, students can record the system used, prompts and outputs, generated code and revisions, analytical alternatives considered, and decisions made by the analyst. Such records help reconstruct the analytical process, evaluate data dependent selection, and trace responsibility for the final conclusion.

Faculty preparation is part of the same problem. Higher education institutions are increasingly encouraging faculty to develop relevant AI skills \citep{robert2026}, yet many instructors lack confidence or adequate preparation for using AI in teaching \citep{lee2024,ruediger2024}. Statisticians who teach need sufficient familiarity with AI to evaluate its outputs, design assignments that preserve statistical learning, establish clear expectations for disclosure, and prepare students for analytical work in which AI is routinely available.

\subsection{Institutions and journals}

One way to make statistical warrant reviewable is through a concise inferential record for consequential analyses. Depending on the purpose of the analysis, the record may identify

\begin{enumerate}[label=\arabic*.]
  \item The substantive question, target population, and estimand or decision.
  \item The origin, selection, measurement, and transformation of the data.
  \item The identifying and modeling assumptions.
  \item The adaptive analytical path that materially affected the reported result.
  \item The sources of uncertainty represented and omitted.
  \item The validation population, loss function, calibration criterion, and stress conditions.
  \item The conditions for deployment, monitoring, escalation, suspension, and retirement.
  \item The responsible individuals and institutions, including the role of AI assistance.
\end{enumerate}

This record operationalizes $\W$ without replacing scientific judgment. Journals can promote it through protocol registration, reproducible materials, disclosure of AI assistance, sensitivity analysis, and explicit separation of exploratory, confirmatory, predictive, causal, and decision claims \citep{nasem2019,pineau2021}. Institutions can support the same objective through independent validation, audit access, and procedures for reconsidering a system when its assumptions or performance no longer support its intended use \citep{sculley2015,raji2020,nist2023}.

\section{Conclusion}

AI transforms the production of statistical analysis but not the logical conditions of valid learning from data. Observations may fail to identify scientific targets, uncertainty remains conditional on an inferential structure, distinct tasks require distinct criteria, and decisions involve consequences not contained in the data. These are not temporary limitations of current AI systems but structural features of reasoning and acting under incomplete information.

The strongest case for statistics in the age of AI therefore does not rest on uniquely human intuition or on the assumption that machines will never reason statistically. It rests on the fact that scientific claims require statistical warrant. Targets must be defined, observation regimes and provenance understood, assumptions justified, uncertainty and validation aligned with intended use, adaptive selection incorporated into inference, and losses and responsibilities made explicit.

The statistician remains fundamental because the discipline is organized around constructing, scrutinizing, and safeguarding these elements. As routine execution becomes automated, the profession will increasingly emphasize analytical architecture, methodological innovation, evaluation, audit, and stewardship. Its most valuable contribution will be to ensure that analyses, however produced, deserve to inform science and action.

\section*{Generative-AI assistance disclosure}

Generative AI was used to assist drafting, structural editing, and LaTeX preparation. The authors are responsible for verifying the sources, arguments, mathematical statements, and final wording before submission. No confidential data were used in preparing this manuscript.





\thispagestyle{fancy}
\bibliographystyle{apalike}
\bibliography{references.bib}

\end{document}